\newcommand\dswap{\tup{{\sf sw}}}
\newcommand\bswap{[{\sf sw}]}
\newcommand\dgswap{\tup{{\sf gsw}}}
\newcommand{\dsab}{\tup{{\sf sb}}}
\newcommand{\bsab}{[{\sf sb}]}
\newcommand\dgsab{\tup{{\sf gsb}}}
\newcommand{\dbridge}{{\tup{\sf br}}}
\newcommand{\dgbridge}{{\tup{\sf gbr}}}
\newcommand{\down}{{\downarrow}}
\newcommand\sab{\text{%
    \setbox0\hbox{$\Diamond$}%
    \rlap{\hbox to \wd0{\hss\hss\hss\raisebox{.05\height}{\hspace*{-1pt}$\textendash$}\hss}}\box0
}}
\newcommand*{\greysquare}{\textcolor{gray}{\blacksquare}}
\newcommand{\bml}{\ensuremath{\mathcal{ML}}\xspace}
\newcommand{\Hl}{\mathcal{HL}}
\newcommand{\fol}{\mathcal{FOL}}
\renewcommand{\L}{{\cal L}}
\newcommand{\PROP}{{\rm \sf PROP}\xspace}
\newcommand{\NOM}{{\rm \sf NOM}\xspace}
\newcommand{\FORM}{{\rm \sf FORM}\xspace}
\newcommand{\model}{\mathcal M}
\newcommand{\E}{{\sf E}}
\newcommand{\A}{{\sf A}}
\newcommand{\tup}[1]{\langle #1 \rangle}
\newcommand{\cset}[1]{\{ #1 \}}
\newcommand{\pow}[1]{\mathcal P #1}
\renewcommand{\iff}{\mbox{iff}}
\newcommand{\ra}{\rightarrow}
\newcommand{\Tr}{{\sf Tr}}
\tikzset{
    >=stealth',
    ar/.style={
           shorten <=2pt,
           shorten >=2pt,}
}
\tikzstyle{world} = [shape=circle,fill=white,inner sep=2pt,draw=black,blur shadow={shadow xshift=0ex,shadow yshift=0ex,shadow scale=1.1}]
\tikzstyle{blackworld} = [shape=circle,fill=black,inner sep=2pt,draw=black,blur shadow={shadow xshift=0ex,shadow yshift=0ex,shadow scale=1.1}]
\tikzstyle{vecArrow} = [thick, decoration={markings,mark=at position
\definecolor{hypblue}{rgb}{0,0,1}
\definecolor{hypred}{rgb}{1,0,0}
\newtheorem{theorem}{Theorem}
\newtheorem{definition}{Definition}
\newtheorem{corollary}{Corollary}
\title{Relation-Changing Logics as Fragments of Hybrid Logics}
\author{Carlos Areces \institute{Universidad Nacional de
    C\'ordoba,\\ Argentina} \institute{CONICET, Argentina}
\and 
Raul Fervari \institute{Universidad Nacional de
    C\'ordoba,\\ Argentina} \institute{CONICET, Argentina}
\and 
Guillaume Hoffmann\institute{Universidad Nacional de
    C\'ordoba,\\ Argentina} \institute{CONICET, Argentina}
\and Mauricio Martel\institute{Universit\"at Bremen, Germany}}
\begin{document}
\maketitle

  \begin{abstract}
    Relation-changing modal logics are extensions of the basic modal
    logic that allow changes to the accessibility relation of a model
    during the evaluation of a formula. In particular, they are
    equipped with dynamic modalities that are able to delete, add, and
    swap edges in the model, both locally and globally.  We provide
    translations from these logics to hybrid logic
    along with an implementation.
    In general, these logics are undecidable, but we use our translations
    to identify decidable fragments.
    We also compare the expressive
    power of relation-changing modal logics with hybrid logics.
  \end{abstract}

\section{Introduction}
\label{sec:intro}

Modal logics~\cite{blackburn06,blackburn01} were originally conceived
as logics of necessary and possible truths.  They are now viewed, more
broadly, as logics that explore a wide range of modalities, or modes
of truth: epistemic (``it is known that''), doxastic (``it is believed
that''), deontic (``it ought to be the case that''), or temporal (``it
has been the case that''), among others. From a model-theoretic
perspective, the field evolved into a discipline that deals with
languages interpreted on various kinds of relational structures or
graphs.  Nowadays, modal logics are actively used in areas as diverse
as software verification, artificial intelligence, semantics and
pragmatics of natural language, law, philosophy, etc. 

As we just mentioned, from an abstract point of view, modal logics can
be seen as formal languages to navigate and explore properties of a
given relational structure.  If we are interested, on the other hand,
in describing how a given relational structure evolves (through time
or through the application of certain operations) then classical modal
languages seem a priori to fall short of the mark. Of course, it is
always possible to model the whole space of possible transformations
as a graph, and use modal languages at that level, but this soon
becomes unwieldy.  It would be more elegant to use \emph{dynamic
modalities} that mimic the changes the structure will undergo.

There exist several dynamic modal operators that fit in this
approach.  A clear example are the dynamic operators introduced in dynamic 
epistemic logics (see, e.g.,~\cite{vanditmarsch07}). These operators are used to model 
changes in the epistemic state of an agent by removing edges
from the graph that represents the information states the agent 
considers possible.  A less obvious example is given by hybrid 
logics~\cite{arec:hybr05b,blackburn95}
equipped with the down arrow operator $\down$ which is used to 
`rebind' names for states to the current point of evaluation. 
Finally, a classical example is Sabotage Logic introduced 
by van Benthem in~\cite{vanbenthem05}. The sabotage operator deletes individual edges in a graph 
and was introduced to solve the \emph{sabotage game}. 
This game is played on a graph by two players, {\em Runner} and {\em Blocker}.
Runner can move on the graph from node to accessible node, starting from a designated point, 
and with the goal of reaching a given final point.
Blocker, on the other hand, can delete one edge from the graph every time 
it is his turn. Runner wins if he manages to move from the origin to the final point, while Blocker wins otherwise.  
Van Benthem turns the sabotage game 
into a modal logic, where the (global) sabotage operator  $\dgsab$ models the moves of Blocker,
and is interpreted on a graph $\model$ at a point $w$ as:
$$
\model,w \models \dgsab \varphi ~ \mbox{ iff } ~ \mbox{ there is a pair $(u, v)$ of $\model$ such that $\model^-_{(u,v)}, w  \models \varphi$}
$$
where $\model^-_{(u,v)}$ is identical to $\model$ except that the edge
$(u,v)$ has been removed.  The moves of Runner, on the other hand, can 
be modeled using the standard $\Diamond$ operator of classical modal logics. 

More recently, Sabotage Logic was proposed as a formalism for reasoning about formal learning theory~\cite{GKVQ09}.
Learning can be seen as a game with two players, {\em Teacher} and {\em Learner},  
where Learner changes his information state through a step-by-step process.
The process is successful if he eventually reaches an information 
state describing the real state of affairs.
The information that Teacher provides 
can be interpreted as feedback about Learner's conjectures about the current state of affairs, allowing him to discard inconsistent
hypotheses.
It should be clear that from this game-theoretical perspective, the interaction between Teacher and Learner 
can be modeled using Sabotage Logic. 

The dynamic approach seems appealing and very flexible: it is easy to come up with situations that nicely fit and 
extend the examples we just mentioned. Discovering alternative routes for Runner in van Benthem's sabotage game, or possible shortcuts that 
Learner can take in learning theory can be modeled by adding new edges to the graph. 
Swapping an edge can be used to represent other scenarios such as changing the direction of a route, 
or allowing Learner to return to a previous information state. All these primitives can also 
be turned into a modal logic in the same way as Sabotage Logic,
in order to get a formal language for reasoning about the games. 

Motivated by scenarios like the ones we just described, we investigate three dynamic primitives that can 
change the accessibility relation 
of a model:  {\em sabotage} (deletes edges from the model), {\em bridge} (adds edges to a model), and {\em swap} 
(turns around edges), both in a global version (performing changes anywhere in the model) and local 
(changing adjacent edges from the evaluation point).
We have chosen these relation-changing operators with the intention of covering a sufficiently varied sample of alternatives, as it is done in previous works. In~\cite{areces12} we first introduced the primitives, and we investigated their expressive
power and model checking problem.  We introduced tableaux methods for relation-changing modal logics in~\cite{areces13}.
In~\cite{areces14igpl} we studied local swap logic, in particular its decidability problem and its relation with first-order and hybrid logics.
In~\cite{AFH15} a general framework for representing
model updates is defined, and connections with dynamic epistemic logic were introduced in~\cite{areces14wollic,ADFS15}.
Finally, we know that the satisfiability problem for the six relation-changing logics considered
is undecidable~\cite{loding03fsttcs,fervari14phd,areces14igpl,martel15}. 

In this article, we show that relation-changing logics can be seen as fragments of hybrid logics.
We consider hybrid logics because it is the best known \emph{modal} logic that can
simulate the semantics of relation-changing operators.
We introduce translations 
to $\Hl(\E,\down)$, the basic modal logic
extended with nominals, the down arrow binder $\down$, and the universal modality $\E$ (in some cases the translations fall into
the less expressive hybrid logic $\Hl(:,\down)$, i.e., with the satisfiability operator $:$ instead
of $\E$).
We also show that relation-changing logics are strictly less expressive than the hybrid logics they are translated into. Then, we discuss how
we can benefit from known decidable fragments of
$\Hl(\E,\down)$ to find decidable fragments of relation-changing modal logics. Finally, we have implemented these 
translations into the hybrid logic prover HTab~\cite{Hoffmann2007} so that it can now reason on relation-changing formulas. 

The article is organized as follows. In Section~\ref{sec:basic} we introduce the syntax and semantics
of relation-changing modal logics. Section~\ref{sec:hybrid} introduces different hybrid
extensions of modal logic which are used in Section~\ref{sec:trans} to encode relation-changing operators.
An implementation is described in Section~\ref{sec:imple} with some examples.
Then, in Section~\ref{sec:decid} we discuss decidability results, and 
in Section~\ref{sec:exppow} we compare the expressivity of relation-changing logics and hybrid logics.
Finally, we conclude with some remarks and
future work in Section~\ref{sec:final}.


\section{Relation-Changing Modal Logics}
\label{sec:basic}

In this section, we formally introduce extensions of the basic
modal logic with relation-changing operators. For more details,
we direct the reader to, e.g.,~\cite{fervari14phd}.

\begin{definition}[Syntax]\label{def:relchangsyn}
Let $\PROP$ be a countable, infinite set of propositional symbols.
The set $\FORM$ of formulas over $\PROP$ is defined as:
$$
\FORM ::= \bot \mid p \mid  \neg \varphi
  \mid \varphi \land \psi 
  \mid \lozenge \varphi
  \mid \blacklozenge \varphi,
$$
where $p\in \PROP$, $\blacklozenge \in \{\dsab, \dbridge, \dswap, \dgsab, \dgbridge,\dgswap\}$,
and $\varphi, \psi \in \FORM$. Other operators are defined
as usual. 

Let $\bml$ (the basic modal logic) be the logic without the $\cset{\dsab, \dbridge, \dswap, \dgsab, \dgbridge,\dgswap}$ operators,
and $\bml(\blacklozenge)$ the extension of $\bml$ allowing also
$\blacklozenge$, for $\blacklozenge \in \cset{\dsab, \dbridge, \dswap,
  \dgsab, \dgbridge,\dgswap}$. In particular, $\bml(\dsab, \dgsab)$, $\bml(\dbridge, \dgbridge)$, and $\bml(\dswap, \dgswap)$ will be called Sabotage Logic, Bridge Logic, and Swap Logic, respectively.
\end{definition}

Semantically, formulas are evaluated in standard relational models, and the meaning of the operators of the basic modal logic remains unchanged (see~\cite{blackburn01} for details). When we evaluate formulas containing relation-changing operators, we will need to keep track of the edges that have been modified. To that end, let us define precisely the models that we will use. 

\begin{definition}[Models and model updates]\label{def:rcmodels} 
A model $\model$ is a triple $\model = \langle W, R, V \rangle$,
where $W$ is a non-empty set whose elements are called points or states;
$R \subseteq W{\times} W$ is the accessibility relation; and $V: \PROP \ra \pow(W)$ 
is a valuation. We define the following notations:
\begin{center}
\begin{tabular}{ll}
\textbf{(sabotaging)} & $\model^-_{S} = \langle W, R^-_{S}, V \rangle$, 
with $R^-_{S} = R{\setminus} S$, $ S\subseteq R$.\\
\textbf{(bridging)} &  $\model^+_{S} =  \langle W, R^+_{S}, V \rangle$, 
with $R^+_{S}  = R \cup S$, $S \subseteq (W{\times}W){\setminus}R$.\\
\textbf{(swapping)} & $\model^*_{S} = \langle W, R^*_{S}, V \rangle$, 
with $R^*_{S} = (R{\setminus} S^{-1} ){\cup}S$,  $S \subseteq R^{-1}$.
\end{tabular}
\end{center}
\end{definition}

Intuitively, $\model^-_{S}$ is obtained from $\model$ by deleting the edges in $S$, and similarly 
$\model^+_S$ adds the edges in $S$ to the accessibility relation, and $\model^*_S$ adds the 
edges in $S$ as inverses of edges previously in the accessibility relation. 

Let $w$ be a state in $\model$, the pair $(\model,w)$ is called a pointed model;
we will usually drop parenthesis and write $\model,w$ instead of $(\model,w)$. In the
rest of this article, we will use $wv$ as a shorthand for $\{(w,v)\}$ or $(w,v)$;
context will always disambiguate the intended use.

\begin{definition}[Semantics]\label{def:relchangsem}
Given a pointed model $\model,w$ and a formula $\varphi$, we say that $\model,w$
satisfies $\varphi$, and write $\model,w \models \varphi$, when
$$
\begin{array}{lcl}
\model,  w \models p & \iff & \  w \in V(p) \\
\model,  w \models \neg \varphi & \iff & \ \model,  w \not\models \varphi\\
\model,  w \models \varphi \land \psi & \iff
 & \ \model,  w \models \varphi \mbox{ and } \model,  w \models \psi\\
\model,  w \models \lozenge \varphi & \iff
  & \mbox{ for some }  v \in W \mbox{ s.t. } (w, v) \in R,
        \model,  v  \models \varphi\\
\model,  w \models \dsab \varphi & \iff & \mbox{ for some }  v \in W
  \mbox{ s.t. } (w, v) \in R,   \model^-_{wv},  v  \models \varphi\\
\model,  w \models \dbridge \varphi & \iff & \mbox{ for some }  v \in W
  \mbox{ s.t. } (w, v) \not\in R,   \model^+_{wv},  v  \models \varphi \\
\model,  w \models \dswap \varphi & \iff & \mbox{ for some }  v \in W
  \mbox{ s.t. } (w, v) \in R,   \model^*_{vw},  v  \models \varphi\\
\model,  w \models \dgsab \varphi & \iff & \mbox{ for some }  v,u\in W,
            \mbox{ s.t. } (v,u) \in R, \model^-_{vu},  w  \models \varphi\\
\model,  w \models \dgbridge \varphi & \iff & \mbox{ for some }  v,u\in W,
            \mbox{ s.t. } (v, u) \not\in R, \model^+_{vu},  w  \models \varphi\\
\model,  w \models \dgswap \varphi & \iff & \mbox{ for some }  v,u\in W,
            \mbox{ s.t. } (v, u) \in R, \model^*_{uv},  w  \models \varphi.
\end{array}
$$

We say that $\varphi$ is satisfiable if for some pointed model $\model, w$ we have 
$\model, w \models \varphi$. 
\end{definition}

The meaning of the relation-changing operators $\dsab$ (local sabotage), $\dbridge$ (local bridge), 
$\dswap$ (local swap), $\dgsab$ (global sabotage), $\dgbridge$ (global bridge) and $\dgswap$ (global swap) should
be clear from the semantic conditions above.  The local operators alter one arrow which is adjacent to the point of 
evaluation (deleting, adding and swapping it, respectively) while the global versions can change an arrow 
anywhere in the model.


\section{Extensions of Modal Logic and Hybrid Logic}\label{sec:hybrid}

In this section, we present several extensions of the basic modal logic $\bml$.
The existential modality~\cite{gorausin92}, written $\E \varphi$, extends $\bml$ in the following way:
$$
\begin{array}{lcl}
\model,  w \models \E \varphi & \iff
  & \mbox{ for some }  v \in W, ~ \model,  v  \models \varphi.
\end{array}
$$
In words, $\E\varphi$ is true at a state $w$ if $\varphi$ is true somewhere in the model. 
The $\E$ operator, with its dual $\A$, has been extensively investigated in classical modal logic~\cite{Spaan93}.

Now we consider several traditional `hybrid' operators (see~\cite{arec:hybr05b} for details):
nominals, the satisfaction operator, and the down-arrow binder.
The basic hybrid logic $\Hl$ is obtained by adding \emph{nominals} to $\bml$.
A nominal is a
propositional symbol that is true at exactly one state in a model. Fix the signature $\tup{\PROP, \NOM}$,
with $\NOM \subseteq \PROP$.
For $n \in \NOM$, we require that its valuation is a singleton set,
i.e., there is a single state $w$ such that $V(n) = \{w\}$.
In addition to nominals, hybrid logic typically involves
the \emph{satisfaction operator}. Given a nominal $n$ and a formula
$\varphi$, the satisfaction operator is written $n:\varphi$. The intended meaning is ``$\varphi$ is true at the
state named by $n$''. Its semantics is given by the following clause:
$$
\begin{array}{rcl}
{\cal M}, w \models n:\varphi
   & \mbox{iff} 
   & \model, v \models \varphi \mbox{ where } V(n) = \{v\}.
\end{array}
$$

Observe that if the language has the $\E$ operator and nominals, then
$n:\varphi$ is definable because  $n:\varphi$ is equivalent
to $\E(n \wedge \varphi)$. 

Finally, consider the \emph{down-arrow binder} operator, written $\downarrow$.
Let the valuation $V_n^w$ be defined by $V_n^w(n)= \{w\}$ and $V_n^w(m)= V(m)$,
when $n \not = m$.
The semantic condition for $\down$ is the following:
$$
\begin{array}{rcl}
\tup{W,R,V}, w \models \down n . \varphi 
   & \mbox{iff} 
   & \tup{W,R,V_n^w}, w \models \varphi.
\end{array}
$$

The language $\Hl(:,\down)$ is a reduction class of first-order logic, and
is thus undecidable~\cite{blackburn95,tencate_phd}.
It remains undecidable even with a single accessibility
relation, no satisfaction operator, and only nominal propositional symbols~\cite{arecroad99}.
$\Hl(\E,\down)$ is equivalent to first-order logic,
since $\down$ can define the operators $\exists$ and
$\forall$ when combined with $\E$ and $\A$.

Contrary to relation-changing modal logics, the logic $\Hl(\E, \down)$
is not able to modify the accessibility relation of a model. However,
it can use the binder to name as many states as needed. Being able to
name states implies that we can also name \emph{specific edges} in the model.
This is what the translations introduced in the next section will exploit.


\section{Translations to Hybrid Logics}
\label{sec:trans}

Relation-changing (RC) logics and hybrid logics with the binder
$\down$ are two families of logics that are dynamic in their own way.
The dynamicity of RC logics is quite obvious: they are able to modify
the accessibility relation in a model in an explicit way.  On the
other hand, hybrid logics carefully move nominals around, avoiding to
touch anything else in the model. If we consider both formalisms, it
would seem that hybrid logics are the gentler and weaker of both.
However, this is not true. Hybrid logics have the advantage of
surgical precision over RC logics. Being able to name states of the
model and use these names turns out to be a crucial advantage. As we
will see now, naming can be used to manipulate \emph{edges} by naming
pairs of states using the pattern $\down x . \Diamond \down y
. \varphi$.  We use this naming technique to simulate edge deletion,
addition, and swapping. 

Our translations are parametrized over a set
of pair of nominals $S \subseteq \NOM \times \NOM$. For a given RC
formula $\varphi$, we write its translation as a hybrid formula
$(\varphi)'_S$. When translating a formula, $S$ will originally be
empty and it will store pairs of nominals that we will use to simulate
the edges affected by the relation-changing operators we encounter
during the translation.

Intuitively, given that the hybrid operators cannot affect the accessibility relation, we have to 
simulate the updates by recording possible affected edges using nominals and $\down$. 
Notice that as a result, in all the relation-changing logics we will consider, the RC formula 
$\Diamond\psi$ cannot be simply translated into a hybrid formula $\Diamond(\psi)'_S$, even though 
we have $\Diamond$ at our disposition in the hybrid language, because in the source language 
$\Diamond$ is interpreted over the updated accessibility relation.  
Instead, diamond-formulas need to be translated in a way that takes into account the edges
that should be considered deleted, added, or swapped. This is why
the translation of diamond-formulas involve the $\Diamond$ operator
mixed with specific considerations about the set of altered edges $S$.

Consider Sabotage Logic with either the local or global
operator. We use the set $S \subseteq \NOM \times \NOM$ to represent
sabotaged edges, i.e., edges
that have been deleted in a given updated model.

\begin{definition}[Sabotage to Hybrid Logic]\label{transsab}
Let $S \subseteq \NOM \times \NOM$ and $n \in \NOM$.
We define the translation $(~ ~)'_S$ from formulas
of $\bml(\dsab,\dgsab)$ to formulas of $\Hl(\E, \down)$ as:
$$
\begin{array}{rl}
(p)'_S = & p\\
(\neg \varphi)'_S = & \neg (\varphi)'_S\\
(\varphi \wedge \psi)'_S = & (\varphi)'_S \wedge (\psi)'_S\\
(\Diamond \varphi)'_S = & \down n . \lozenge ( \neg {\sf belongs}(n,S) \wedge (\varphi)'_S)\\
(\dsab \varphi)'_S    = & \down n. \lozenge ( \neg {\sf belongs}(n,S) \wedge \down m. (\varphi)'_{S \cup nm})\\
(\dgsab \varphi)'_S     = & \down k. \E \down n. \lozenge ( \neg {\sf belongs}(n,S) \wedge \down m. k{:}(\varphi)'_{S \cup nm})
\end{array}
$$

\noindent where $n$, $m$ and $k$ are nominals that do not appear in $S$, and:
$${\sf belongs}(n,S) = \underset{xy\in S}{\bigvee} (y ~ \wedge ~ n{:}x)$$
\end{definition}

A few comments are in order to understand the translation.
First, given some model $\model=\tup{W,R,V}$ and
some set $S \subseteq \NOM \times \NOM$,
the formula $\down n. \Diamond (\neg {\sf belongs}(n,S))$ is true at
some state $w\in W$ if there exists some state $v$ such that $(w,v) \in R$
and there is no pair of nominals $(x,y) \in S$ such that $(V(x),V(y)) = (w,v)$.
Then, observe that the cases for $\dsab$ and $\dgsab$ modify the set
of deleted pairs in the recursive call to the translation, in both cases by 
adding an edge named $nm$. In the $\dsab$ case, $n$ names the evaluation state
of the formula, while in the $\dgsab$ case, $n$ names some state anywhere in
the model.

Finally, all nominals introduced by the translation are bounded
exactly once. Then we can define the following unequivocal notation:
let $S \subseteq \NOM \times \NOM$,
we define $\bar{S}=\{(\bar{x},\bar{y}) \mid  (x,y) \in S \}$,
where $\bar{n}$ is the state named by the nominal $n \in \NOM$ under
the current valuation of a model.

When considering the translated formula $(\varphi)'_S$ and its truth
in some model $\model=\tup{W,R,V}$, one question that may arise is
what should be the initial valuation of the nominals that appear
in $(\varphi)'_S$. By definition of models for hybrid
logics, nominals must be true at some state. This is not
problematic: in $(\varphi)'_S$, nominals are immediately bounded
by the $\down$ operator, so the truth value of $(\varphi)'_S$ does not depend
on their initial valuation. Hence, we can choose some state $w\in W$
and say that all nominals are bounded to it. This enables us to talk
about equivalence preservation of the translation: the same model $\model$
can be used for $\varphi$ and its translation $(\varphi)'_S$ modulo
the addition of the set of nominals that appear in $(\varphi)'_S$ and their
valuation to some arbitrary state. Then, we can state:

\begin{theorem}
For $\model=\tup{W,R,V}$ a model, $w\in W$, and $\varphi \in \bml(\dsab, \dgsab)$
we have:
$$\model,w\models\varphi ~ ~ \iff ~  ~ \model,w\models(\varphi)'_\emptyset.$$
\end{theorem}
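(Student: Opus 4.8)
The plan is to prove a stronger, generalized statement by induction on the structure of $\varphi$, since the theorem as stated (with $S = \emptyset$) is too weak to serve as its own induction hypothesis: the recursive calls in Definition~\ref{transsab} produce translations with non-empty $S$. The right invariant to carry through the induction relates the truth of $\varphi$ in an \emph{updated} model (where the edges recorded by $S$ have been deleted) to the truth of $(\varphi)'_S$ in the \emph{original} model. Concretely, using the notation $\bar{S}$ introduced just before the theorem, I would prove:
\begin{quote}
For every $\varphi \in \bml(\dsab,\dgsab)$, every model $\model = \tup{W,R,V}$ (suitably expanded so that all nominals occurring in $(\varphi)'_S$ are interpreted), every $w \in W$, and every $S \subseteq \NOM \times \NOM$ with $\bar{S} \subseteq R$, we have $\model^-_{\bar{S}}, w \models \varphi \iff \model, w \models (\varphi)'_S$.
\end{quote}
The theorem then follows by taking $S = \emptyset$, since $\model^-_{\emptyset} = \model$.

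The base case ($p$) and the Boolean cases are immediate, as the translation is homomorphic there and propositional truth does not depend on the accessibility relation. The crux is the modal and dynamic cases, and here the key lemma to establish first is the meaning of the $\mathsf{belongs}$ gadget: in $\model$ with valuation naming $n$ at a state $u$, the formula $\mathsf{belongs}(n,S)$ holds at a state $v$ exactly when $(u,v) \in \bar{S}$, i.e., when the edge from the state named $n$ to the current state is a recorded (deleted) edge. Granting this, the $\Diamond$ case works as follows: $\model^-_{\bar{S}}, w \models \Diamond\varphi$ means there is $v$ with $(w,v) \in R \setminus \bar{S}$ and $\model^-_{\bar{S}}, v \models \varphi$; the translation $\down n.\Diamond(\neg\mathsf{belongs}(n,S) \wedge (\varphi)'_S)$ first binds $n$ to $w$, then quantifies over $R$-successors $v$, and the conjunct $\neg\mathsf{belongs}(n,S)$ precisely filters out those $v$ for which $(w,v) \in \bar{S}$, leaving exactly the successors in $R \setminus \bar{S}$. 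Applying the inner induction hypothesis (with the \emph{same} $S$) to $(\varphi)'_S$ at $v$ closes this case.

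The $\dsab$ case is the main obstacle and deserves care. Here $\model^-_{\bar{S}}, w \models \dsab\varphi$ means there is $v$ with $(w,v) \in R \setminus \bar{S}$ such that $(\model^-_{\bar{S}})^-_{wv}, v \models \varphi$, i.e., $\varphi$ holds at $v$ in the model with the \emph{additional} edge $(w,v)$ deleted. On the translation side, $\down n.\Diamond(\neg\mathsf{belongs}(n,S) \wedge \down m.(\varphi)'_{S \cup nm})$ binds $n$ to $w$, moves to a successor $v$ not filtered out by $S$, then binds $m$ to $v$, and recursively translates with $S \cup nm$. The delicate point is verifying that $\overline{S \cup nm} = \bar{S} \cup \{(w,v)\} = \bar{S} \cup wv$ under the updated valuation, so that the inner induction hypothesis yields $\model^-_{\overline{S \cup nm}}, v \models \varphi \iff \model, v \models (\varphi)'_{S \cup nm}$, matching $(\model^-_{\bar{S}})^-_{wv} = \model^-_{\bar{S} \cup wv}$. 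One must also check that the side condition $\overline{S \cup nm} \subseteq R$ is maintained (the newly added pair $(w,v)$ does lie in $R$ since $v$ is an $R$-successor of $w$), so the induction hypothesis applies. The $\dgsab$ case is analogous but uses $\down k$ to remember the evaluation point $w$, the existential $\E$ to reach an arbitrary state (rather than a successor of $w$) from which an edge is deleted, and the satisfaction operator $k{:}$ to return evaluation to $w$ after the update; here I would rely on the fact, recorded in the excerpt, that $k{:}\psi$ is evaluated back at the state named $k$. Throughout, a subsidiary bookkeeping point is that the freshly introduced nominals $n,m,k$ do not occur in $S$, so rebinding them neither disturbs the pairs already in $\bar{S}$ nor causes name clashes; this is precisely what licenses the clean equation $\overline{S \cup nm} = \bar{S} \cup \{(w,v)\}$.
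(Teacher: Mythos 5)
Your proposal is correct and follows essentially the same route as the paper: the paper's proof is exactly a structural induction on $\varphi$ with the strengthened hypothesis $\model^-_{\bar{S}},w\models\varphi \iff \tup{W,R,V'},w\models(\varphi)'_S$ (where $V'$ extends $V$ by interpreting the nominals in $S$), handling the $\lozenge$ and $\dsab$ cases via the semantics of the ${\sf belongs}$ gadget and treating $\dgsab$ as analogous. Your explicit side condition $\bar{S}\subseteq R$ and the isolation of the ${\sf belongs}$ lemma are just careful statements of what the paper leaves implicit.
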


\begin{proof}
We use structural induction on the relation-changing formula, the inductive
hypothesis being:
$$\model^-_{\bar{S}},w\models\varphi ~ ~ \iff ~  ~ \tup{W,R,V'},w\models(\varphi)'_S$$
with $S\subseteq \NOM\times\NOM$, and $V'$ is exactly as $V$
except that for all $(x,y)\in S$, there are $v,u\in W$ such that
$V'(x)=v$ and $V'(y)=u$.
Boolean cases
are straightforward, so we only prove the non-trivial inductive cases. 

\smallskip

\noindent $\varphi=\lozenge\psi$: For the left to right direction, suppose $\model^-_{\bar{S}},w\models\lozenge\psi$.
Then there is some $v\in W$ such that $(w,v)\in R^-_{\bar{S}}$ and $\model^-_{\bar{S}},v\models\psi$.
Because $(w,v)\notin \bar{S}$, then there is no $(x,y)\in S$ such that $(\bar{x},\bar{y})=(w,v)$.
By inductive hypothesis, we have $\model,v\models(\psi)'_S$, and because we can
name $w$ with a fresh nominal $n$, we obtain $\tup{W,R,V^w_n},v\models\neg{\sf belongs}(n,S)\wedge(\psi)'_S$.
Therefore, we have $\model,w\models\down n . \lozenge ( \neg {\sf belongs}(n,S) \wedge (\psi)'_S)$, and
as a consequence we get $\model,w\models(\psi)'_S$.

For the other direction, suppose $\model,w\models(\psi)'_S$, i.e., 
$\model,w\models\down n . \lozenge ( \neg {\sf belongs}(n,S) \wedge (\psi)'_S)$.
Then we have $\tup{W,R,V^w_n},w\models \lozenge ( \neg {\sf belongs}(n,S) \wedge (\psi)'_S)$,
and, by definition, there is some $v\in W$ such that $(w,v)\in R$,
$\tup{W,R,V^w_n},v\models  \neg {\sf belongs}(n,S)$ and $\tup{W,R,V^w_n},v\models (\psi)'_S$.
Because we have $\neg {\sf belongs}(n,S)$, there is no $(x,y)\in S$ such that $(\bar{x},\bar{y})=(w,v)$,
which implies $(w,v)\in R$ if and only if $(w,v)\in R^-_{\bar{S}}$. On the other hand, by inductive hypothesis
we have $\model^-_{\bar{S}},v\models\psi$, then we have $\model^-_{\bar{S}},w\models\lozenge\psi$.

\medskip

\noindent $\varphi=\dsab\psi$: For the left to right direction, suppose $\model^-_{\bar{S}},w\models\dsab\psi$. Then there is some $v\in W$ such that $(w,v)\in R^-_{\bar{S}}$ and $(\model^-_{\bar{S}})^-_{wv},v\models\psi$.
This is equivalent to say $\model^-_{\bar{S}\cup wv},v\models\psi$.
Because $(w,v)\notin \bar{S}$, then there is no $(x,y)\in S$ such that $(\bar{x},\bar{y})=(w,v)$ ~$(\otimes)$.
By inductive hypothesis we have $\tup{W,R,((V')^w_n)^v_m},v\models(\psi)'_{S\cup nm}$, 
where $V'$ is exactly as $V$ but it binds all the nominals which appear in $S$.
By definition, we get $\tup{W,R,(V')^w_n)},v\models\down m. (\psi)'_{S\cup nm}$,
and by $(\otimes)$ we
have $\tup{W,R,(V')^w_n)},v\models \neg{\sf belongs}(n,S) \wedge \down m. (\psi)'_{S\cup nm}$.
Then (by definition) $\tup{W,R,V'},v\models \down n. \lozenge(\neg{\sf belongs}(n,S) \wedge \down m. (\psi)'_{S\cup nm})$,
and, as a consequence, we have $\tup{W,R,V'},v\models (\varphi)'_S$.

For the other direction, suppose $\tup{W,R,V'},w\models(\psi)'_S$, i.e., 
$\tup{W,R,V'},w\models\down n . \lozenge ( \neg {\sf belongs}(n,S) \wedge \down m. (\psi)'_{S\cup nm})$,
where  $V'$ is exactly as $V$ but it binds all the nominals which appear in $S$.
Then, we have $\tup{W,R,(V')^w_n},w\models \lozenge ( \neg {\sf belongs}(n,S) \wedge \down m. (\psi)'_{S\cup nm})$,
and, by definition, there is some $v\in W$ such that $(w,v)\in R$,
$\tup{W,R,V^w_n},v\models  \neg {\sf belongs}(n,S)$ and $\tup{W,R,V^w_n},v\models \down m. (\psi)'_{S\cup nm}$.
Then, $\tup{W,R,((V')^w_n)^v_m},v\models (\psi)'_{S\cup nm}$.
Because we have $\neg {\sf belongs}(n,S)$, there is no $(x,y)\in S$ such that $(\bar{x},\bar{y})=(w,v)$,
which implies $(w,v)\in R$ if and only if $(w,v)\in R^-_{\bar{S}}$.
On the other hand, by inductive hypothesis
we have $\model^-_{\bar{S}\cup wv},v\models\psi$, and thus we have $\model^-_{\bar{S}},w\models\dsab\psi$.

\medskip

\noindent $\varphi=\dgsab\psi$: this case is very similar to the previous one.
\end{proof}

For Bridge Logic, we use the set $B \subseteq \NOM \times \NOM$
to represent the new edges.
New edges present in $B$ mean that the translation of
the modality $\Diamond$ should be able to take them.
This explains why the translation of $\Diamond$ does not
look like a $\Diamond$ with an extra condition, but like an $\E$
with two possibilities: we traverse an edge that is either in the original
model or an edge from the $B$ set.

\begin{definition}[Bridge to Hybrid Logic]\label{transbr}
Let $B \subseteq \NOM \times \NOM$. We define $(~ ~)'_B$ from formulas of \linebreak $\bml(\dbridge,\dgbridge)$
to formulas of $\Hl(\E, \down)$ as:
$$
\begin{array}{rl}
(p)'_B = & p\\
(\neg \varphi)'_B = & \neg (\varphi)'_B\\
(\varphi \wedge \psi)'_B = & (\varphi)'_B \wedge (\psi)'_B\\
(\Diamond \varphi)'_B = &
 \down n . \E \down m ( ~ (n{:}\lozenge m \vee {\sf belongs}(n,B))
 ~ \wedge ~   (\varphi)'_B )\\
(\dbridge \varphi)'_B = & \down n. \E \down m . (
       \neg n{:}\Diamond m \wedge \neg {\sf belongs}(n,B)
               \wedge  (\varphi)'_{B{\cup}nm})\\
(\dgbridge \varphi)'_B = & \down k . \E  \down n. \E \down m . (
      \neg n{:}\Diamond m \wedge \neg {\sf belongs}(n,B)
               \wedge  k{:}(\varphi)'_{B{\cup}nm})\\
\end{array}
$$

\noindent where $n$, $m$ and $k$ are nominals that do not appear in $B$,
and ${\sf belongs}$ is defined as in Definition~\ref{transsab}.
\end{definition}

\begin{theorem}
For $\model=\tup{W,R,V}$ a model, $w\in W$, and $\varphi \in \bml(\dbridge, \dgbridge)$,
we have:
$$\model,w\models\varphi ~ ~ \iff ~  ~ \model,w\models(\varphi)'_\emptyset.$$
\end{theorem}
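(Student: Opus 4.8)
The plan is to follow the template of the Sabotage proof: structural induction on $\varphi$ with a strengthened inductive hypothesis, the theorem itself being the instance $B=\emptyset$ (where $\model^+_\emptyset=\model$ and $V'=V$). The hypothesis I would carry is
$$\model^+_{\bar{B}},w\models\varphi ~ ~ \iff ~  ~ \tup{W,R,V'},w\models(\varphi)'_B,$$
where $B\subseteq\NOM\times\NOM$ records the edges added so far and $V'$ agrees with $V$ except that it binds every nominal occurring in $B$ to some state. Two facts drive the whole argument. First, evaluated at the state named by $m$ (call it $v$) with $n$ naming $w$, the formula ${\sf belongs}(n,B)$ holds exactly when $(w,v)\in\bar{B}$, while $n{:}\Diamond m$ holds exactly when $(w,v)\in R$; hence their disjunction captures membership in $R^+_{\bar{B}}=R\cup\bar{B}$, and the conjunction of their negations captures its complement. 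Second, the set identity $R^+_{\bar{B}}\cup\{(w,v)\}=R^+_{\overline{B\cup nm}}$ (once $n,m$ name $w,v$) is the Bridge analogue of the identity $\model^-_{\bar{S}\cup wv}$ used before. I would also record the invariant $\bar{B}\cap R=\emptyset$, preserved by the side conditions, which keeps every update $\model^+_{\bar{B}}$ well defined. The Boolean cases are immediate, so only $\Diamond$, $\dbridge$ and $\dgbridge$ need attention.

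For $\varphi=\Diamond\psi$, the decisive difference from Sabotage is that a freshly added edge may point to a state unreachable in $R$, so the translation cannot use a filtered $\lozenge$ and instead opens an $\E$ over all candidate targets, filtering by the disjunctive edge-test. Left to right, a witness $v$ with $(w,v)\in R\cup\bar{B}$ and $\model^+_{\bar{B}},v\models\psi$ is produced by the semantics; naming $w$ by $n$ and $v$ by $m$ makes the disjunction true by the first fact above, and the inductive hypothesis supplies $(\psi)'_B$ at $v$. Right to left, the $\E$ yields some $v$, the disjunction forces $(w,v)\in R^+_{\bar{B}}$, and the inductive hypothesis transports $\psi$ back. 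The case $\varphi=\dbridge\psi$ is handled using the second fact: the conjunction $\neg n{:}\Diamond m \wedge \neg{\sf belongs}(n,B)$ pins down a pair $(w,v)\notin R^+_{\bar{B}}$, i.e.\ a genuinely new edge, after which both directions close by invoking the inductive hypothesis on $\psi$ with the enlarged parameter $B\cup nm$. The global case $\dgbridge$ is identical except that the extra outer $\down k.\E$ first fixes the evaluation point with $k$ and lets the inner $\E\down n.\E\down m$ choose an arbitrary pair anywhere in the model, with $k{:}(\cdot)$ returning evaluation to the original point, exactly as $\dgsab$ reduced to $\dsab$.

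I expect the main obstacle to be verifying the $\Diamond$-clause, since it is the only genuinely new ingredient relative to the Sabotage proof: one must check, in both directions, that replacing the filtered $\lozenge$ by an $\E$ together with the positive disjunctive test $n{:}\Diamond m \vee {\sf belongs}(n,B)$ faithfully tracks the augmented relation $R\cup\bar{B}$ rather than $R$ alone. The remaining work, namely threading the valuation $V'$ through the nested binders so that the nominals in $B$ keep naming their intended endpoints after an $\E$-jump, parallels the Sabotage argument and is routine bookkeeping.
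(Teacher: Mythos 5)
Your proposal is correct and takes essentially the same approach as the paper: the paper's entire proof consists of stating the very same strengthened inductive hypothesis ($\model^+_{\bar{B}},w\models\varphi$ iff $\tup{W,R,V'},w\models(\varphi)'_B$) and remarking that the reasoning is ``similar'' to the Sabotage case. Your case analysis for $\Diamond$, $\dbridge$ and $\dgbridge$ --- including the observation that the disjunction $n{:}\Diamond m \vee {\sf belongs}(n,B)$ tracks $R\cup\bar{B}$ and the well-definedness invariant $\bar{B}\cap R=\emptyset$ --- is exactly the detail that the paper leaves implicit.
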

\begin{proof}
A similar reasoning can be done with the following inductive hypothesis:
$$\model^+_{\bar{B}},w\models\varphi ~ ~ \iff ~  ~ \tup{W,R,V'},w\models(\varphi)'_B$$
with $B\subseteq \NOM\times\NOM$, and $V'$ is exactly as $V$
except that for all $(x,y)\in B$, there are $v,u\in W$ such that
$V'(x)=v$ and $V'(y)=u$.
\end{proof}

We finish with the case of Swap Logic.
We presented a different translation in~\cite{areces14igpl} for the local case only. 
As we did for Sabotage Logic, we use $S \subseteq \NOM \times \NOM$ to
represent the set of deleted edges, i.e., the edges
that should not be possible to traverse in a given updated model.
Indeed, swapping a non-reflexive edge of a model has the effect of
deleting it, along with adding its inverse.
This implies that $S^{-1}$ is a set of edges that we can currently
traverse. All of this requires that $S$ do not contain any reflexive edge,
since a swapped reflexive edge is not deleted. Neither can it contain
a pair of symmetric edges since that would be contradictory.

To ensure this, the translation gets more cautious when handling $\dswap$
and $\dgswap$. When swapping occurs, three possible cases are
taken into account. The first one is when a reflexive edge is swapped.
In that case, the translation continues with the set $S$ left unchanged,
but we require some reflexive
edge to be present, be it at the current state for $\dswap$ with 
$\down n . \Diamond n$, or anywhere in the model for $\dgswap$
with $\E\down n . \Diamond n$.

The second case is when we swap an irreflexive edge
that has never been swapped before.
Hence we ensure that this edge is present in the model, that it is
irreflexive, and that neither this edge nor its
inverse is in $S$. We then add the nominals that name it to $S$
before moving on with the translation.

The last case is when we traverse an already swapped edge.
That is, for some $xy \in S$, we traverse the edge referred to
by the nominals $yx$. In this case, we do not need to require the presence
of any new edge in the model. We assume to be standing at the state named
by $y$ and that the rest of the formula is satisfied at $x$, with the
modification that we remove $xy$ from $S$ and add $yx$ to it.

An attentive reader would object: why not just remove $xy$ from the set
$S$ since swapping some edge twice just makes it return to its configuration
in the original model? The answer is that there is a corner case when some edge
\emph{and} its symmetric are both present in the initial model. Then, the action
of swapping it twice is not supposed to restore its symmetric. This is what we do
by adding $yx$ to the set $S$: we ensure the former symmetric edge is no
longer present.

\begin{definition}[Swap to Hybrid Logic]\label{transswap}
Let $S \subseteq \NOM \times \NOM$. 
We define $(~ ~)'_S$ from formulas of \linebreak$\bml(\dswap,\dgswap)$ to formulas of
$\Hl(\E, \down)$ as:
$$
\begin{array}{rl}
(p)'_S = & p\\
(\neg \varphi)'_S = & \neg (\varphi)'_S\\
(\varphi \wedge \psi)'_S = & (\varphi)'_S \wedge (\psi)'_S\\
(\Diamond \varphi)'_S = &
 (\down n . \lozenge (\neg{\sf belongs}(n,S) \wedge (\varphi)'_S ))
 ~ ~\vee ~ ~ {\sf isSat}(S^{-1},(\varphi)'_S) \\
(\dswap \varphi)'_S = &
  ~ ~ ~ ( ~ \down n. \Diamond n ~ ~ \wedge ~ ~ (\varphi)'_S ~)\\
 & \vee ~ \down n . \lozenge ({\neg}n
               \wedge \neg{\sf belongs}(n,S)
               \wedge \neg{\sf belongs}(n,S^{-1})
               \wedge \down m . (\varphi)'_{S\cup nm})\\
 & \vee ~  \underset{xy \in S}{\bigvee} (y \wedge x{:}(\varphi)'_{(S\setminus xy) \cup yx}) \\
(\dgswap \varphi)'_S = &
   ~ ~ ~ (~ \E \down n. \Diamond n ~ ~ \wedge ~ ~ (\varphi)'_S ~ )\\
 & \vee ~ \down k. \E \down n . \lozenge ({\neg}n
               \wedge \neg{\sf belongs}(n,S)
               \wedge \neg{\sf belongs}(n,S^{-1})
               \wedge \down m. k{:}(\varphi)'_{S\cup nm}) \\
 & \vee ~  \underset{xy \in S}{\bigvee} (\varphi)'_{(S\setminus xy) \cup yx} \\
\end{array}
$$
\noindent where $n$, $m$ and $k$ are nominals that do not appear in $S$,
${\sf belongs}$ is defined as in Definition~\ref{transsab}, and
$$
{\sf isSat}(S,\varphi) =  \underset{xy\in S}{\bigvee} (x \wedge y{:}\varphi).
$$
\end{definition}

The formula ${\sf isSat}(S,(\varphi)'_S)$ says that the translation of $\varphi$ is
satisfiable at the end of some of the edges belonging to $S$. Note that
the translation maps formulas of $\bml(\dswap)$ to the less expressive
$\Hl(:, \down)$, i.e., the $\E$ operator is not required.

\begin{theorem}
For $\model=\tup{W,R,V}$ a model, $w\in W$ and $\varphi \in \bml(\dswap, \dgswap)$
we have:
$$\model,w\models\varphi ~ ~ \iff ~  ~ \model,w\models(\varphi)'_\emptyset.$$
\end{theorem}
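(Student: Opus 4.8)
The plan is to mirror the structure of the proofs for Sabotage and Bridge Logic, proving a generalized statement by structural induction on $\varphi\in\bml(\dswap,\dgswap)$. The essential difference is the two-sided effect of a swap: swapping an irreflexive edge both deletes it and installs its converse. Accordingly I would let the set $S$ record, through its pairs of nominals, the edges that have been swapped away, so that the edges in $\bar{S}^{-1}$ are exactly the currently traversable converses. The companion of the update is then the relation $R_S:=(R\setminus\bar{S})\cup\bar{S}^{-1}$, and the inductive hypothesis I would carry is
$$\tup{W,R_S,V},w\models\varphi \quad\text{iff}\quad \tup{W,R,V'},w\models(\varphi)'_S,$$
where $V'$ agrees with $V$ except that it interprets the nominals occurring in $S$ (since $\varphi$ contains no nominals, the left-hand side does not depend on $V'$). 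Unlike the sabotage and bridge cases I would \emph{not} phrase $R_S$ as a single $\model^*$-update of $\model$: the corner case below shows that $\bar{S}$ may contain pairs absent from $R$, so $\bar{S}^{-1}\not\subseteq R^{-1}$ and $\model^*_{\bar{S}^{-1}}$ need not be defined; carrying $R_S$ explicitly avoids this. Throughout I would maintain the invariant that $\bar{S}$ is irreflexive and contains no symmetric pair (no $(a,b)$ with both $(a,b)$ and $(b,a)$ in $\bar{S}$), which is exactly the well-formedness condition discussed before Definition~\ref{transswap}.

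After dispatching the Boolean cases, the $\lozenge$ case rests on the observation that its two disjuncts together enumerate exactly the edges of $R_S$ leaving $w$. The first disjunct $\down n.\lozenge(\neg{\sf belongs}(n,S)\wedge(\varphi)'_S)$ walks an edge $(w,v)\in R$ with $(w,v)\notin\bar{S}$, i.e.\ an edge of $R\setminus\bar{S}\subseteq R_S$; the second disjunct ${\sf isSat}(S^{-1},(\varphi)'_S)$ walks an edge $(w,v)$ with $(v,w)\in\bar{S}$, i.e.\ an edge of $\bar{S}^{-1}\subseteq R_S$. Since $R_S=(R\setminus\bar{S})\cup\bar{S}^{-1}$, the disjunction holds at $w$ exactly when some $R_S$-successor of $w$ satisfies $(\varphi)'_S$, and the inductive hypothesis closes the case in both directions.

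For $\dswap$ I would treat its three disjuncts as the three ways the semantics can fire. (i)~The reflexive swap $(\down n.\lozenge n\wedge(\varphi)'_S)$ leaves the relation unchanged; soundness requires a self-loop in $R_S$, and by irreflexivity of $\bar{S}$ we have $(w,w)\in R_S$ iff $(w,w)\in R$, so testing $\lozenge n$ in $\tup{W,R,V'}$ is correct. (ii)~The fresh irreflexive swap moves along an edge $(w,v)\in R$ with $v\neq w$, $(w,v)\notin\bar{S}$ and $(w,v)\notin\bar{S}^{-1}$, and passes to $S\cup nm$ with $\bar{n}=w,\bar{m}=v$; here I would check that $(R_S\setminus\{(w,v)\})\cup\{(v,w)\}$ equals $R_{S\cup nm}$ and that the two guards together with $\neg n$ preserve irreflexivity and antisymmetry of the enlarged set. (iii)~The re-swap of an already installed converse traverses an edge $(\bar{y},\bar{x})\in\bar{S}^{-1}$ for some $xy\in S$ (no new edge need be required, since $\bar{S}^{-1}\subseteq R_S$ always) and replaces $xy$ by $yx$ in $S$. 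The $\dgswap$ case is identical in spirit, using $\E$ to locate the swapped edge anywhere in the model and the bound nominal $k$ to return to $w$ before recursing.

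The main obstacle is case (iii) together with the symmetric-edge corner case discussed above. Here I must verify that replacing $xy$ by $yx$ in $S$ --- rather than simply deleting $xy$ --- yields precisely the relation $(R_S\setminus\{(\bar{y},\bar{x})\})\cup\{(\bar{x},\bar{y})\}$ obtained by swapping $(\bar{y},\bar{x})$ in $R_S$; that is, that $R_{S'}$ for $S'=(S\setminus xy)\cup yx$ coincides with this swap. The delicate point is when both $(\bar{x},\bar{y})$ and $(\bar{y},\bar{x})$ lie in the original $R$: swapping twice must \emph{not} restore the converse, and keeping $(\bar{y},\bar{x})$ in the deleted set $\bar{S'}$ is exactly what prevents this, whereas in the complementary case $(\bar{y},\bar{x})\notin R$ the extra deletion is vacuous and $R_{S'}$ correctly collapses back to $R$. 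Checking that $S'$ still satisfies the standing invariant --- so that disjuncts (i) and (ii) remain sound on the recursive call --- is the remaining bookkeeping, and it is this invariant maintenance across the swap cases, rather than any single equivalence, that I expect to be the fiddly heart of the proof.
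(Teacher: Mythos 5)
Your proposal is correct and takes essentially the same route as the paper: the paper's entire proof is the one-line statement of your inductive hypothesis, namely $\model^*_{\bar{S}^{-1}},w\models\varphi$ iff $\tup{W,R,V'},w\models(\varphi)'_S$, whose left-hand relation $(R\setminus\bar{S})\cup\bar{S}^{-1}$ is exactly your $R_S$, with all of the case analysis left as ``similar reasoning.'' Your one refinement---carrying $R_S$ explicitly rather than as the update $\model^*_{\bar{S}^{-1}}$, on the grounds that after a re-swap $\bar{S}$ may contain pairs outside $R$ so that the side condition $S\subseteq R^{-1}$ of Definition~\ref{def:rcmodels} fails---is sound and in fact repairs a formal imprecision in the paper's own statement of the hypothesis, and the invariant you maintain on $\bar{S}$ (irreflexivity, no symmetric pairs) is precisely what the sketched argument needs to go through.
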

\begin{proof}
Again, a similar reasoning can be done with the following inductive hypothesis:
$$\model^*_{\bar{S^{-1}~}},w\models\varphi ~ ~ \iff ~  ~ \tup{W,R,V'},w\models(\varphi)'_S$$
with $S\subseteq \NOM\times\NOM$, and $V'$ is exactly as $V$
except that for all $(x,y)\in S$, there are $v,u\in W$ such that
$V'(x)=v$ and $V'(y)=u$.
\end{proof}


\section{Implementation and Examples}
\label{sec:imple}

We have implemented these translations as a new feature of the
tableaux-based theorem prover HTab~\cite{Hoffmann2007}. Its version
1.7.1 can be downloaded from \url{http://hub.darcs.net/gh/htab} along with
example formulas.
Instructions are provided, explaining how
to use HTab to check satisfiability of some relation-changing formula
and to generate a model from an open tableau.

HTab originally handles the hybrid logic $\Hl(:, \E, \down)$,
and guarantees termination
of any fragment that lacks the $\down$ binder.
As part of the work presented in this article, we added the following
feature: when passed the \verb_--translate_ flag, HTab interprets the
input formula as a relation-changing one. It first translates it to
the corresponding $\Hl(\E, \down)$ formula (or more precisely,
$\Hl(:,\down)$ formula in the case of local sabotage and local swap),
and then runs its internal hybrid tableaux calculus on the translation.

Since the translation uses the $\down$ binder, HTab may never terminate
on some specific relation-changing formulas. Even in the terminating cases,
the size of the translated formula (in particular for
swap logic) can make HTab run for a very long time before giving an answer.
However, there are several possible workarounds.
First, a time limit in seconds can be given with
the flag \verb_-t_. Also, it may be useful to disable the semantic branching
optimization by passing \verb_--sembranch=no_. Tweaking unit propagation
can also be useful for certain formulas, in some cases by disabling it
(with \verb_--no-unit-prop_), and in others, by making it more aggressive (with \verb_--eager_).
The flag \verb_--minimal_ makes HTab try reusing existing states in the
model instead of systematically generating new ones. This introduces
many more branches in the tableau, making it slower, but it can be crucial
to make some cases terminate. Finally, the flag \verb_--random_
uses pseudorandomness to select the next disjunct formula
to expand in the tableau, and also shuffles the order in which
the branches are explored, including those introduced
by \verb_--minimal_. The advantage is that some pseudorandom run of HTab
on a given input formula could terminate quickly by doing the right choices.
Then, this run can be reproduced by setting the pseudorandom seed of HTab
with the \verb_--seed_ flag.
More information is available by running \verb_htab --help_.

For all three translations, the implementation has the following particular case:
$$(\Diamond \varphi)'_\emptyset = \Diamond (\varphi)'_\emptyset$$

This avoids introducing unnecessary nominals and makes the translated formula
more readable. The generated hybrid formula can be seen by using the \verb_--showformula_ flag.

Since the translations are equivalence-preserving, the models built by HTab satisfy the
input RC formula. Let us see a few examples.
We present the formulas in multi-line way, each line is a conjunct of the whole
formula. Also, some new logical connectors are used in the expected way.

\bigskip

\begin{minipage}[t]{0.3\textwidth}
Input $\bml{\dsab}$ formula:

$$
\begin{array}{l}
\Diamond(A \wedge \neg B  \wedge \Diamond\Diamond A)\\
\Diamond(B \wedge \neg A  \wedge \Diamond\Diamond B)\\
\bsab(A \rightarrow \Box\Box\neg A)\\
\bsab(B \rightarrow \Box\Box\neg B)\\
\end{array}
$$
\end{minipage}
~
\begin{minipage}[t]{0.3\textwidth}
Translated hybrid formula:

\vspace{-0.4cm}
$$
\begin{array}{l}
\Diamond(A \wedge \neg B  \wedge \Diamond\Diamond A)\\
\Diamond(B \wedge \neg A  \wedge \Diamond\Diamond B)\\
\down n_0 . \Box ( \down n_1  . (\neg A \\
~ ~ ~  \vee \down n_2 . \Box ( ( n_1 \wedge n_2: n_0)\\
~ ~ ~ ~ ~  \vee \down n_3 . \Box ( (n_1 \wedge n_3:n_0 ) \\
~ ~ ~ ~ ~ ~ ~  \vee \neg A ) ))  )\\
\down n_0 . \Box ( \down n_1  . (\neg B \\
~ ~ ~  \vee \down n_2 . \Box ( ( n_1 \wedge n_2: n_0)\\
~ ~ ~ ~ ~  \vee \down n_3 . \Box ( (n_1 \wedge n_3:n_0 ) \\
~ ~ ~ ~ ~ ~ ~  \vee \neg B ) ))  )\\
\end{array}
$$
\end{minipage}
~
\begin{minipage}[t]{0.3\textwidth}
Model found by HTab:

\vspace{-0.4cm}
\begin{center}
\includegraphics[width=3cm]{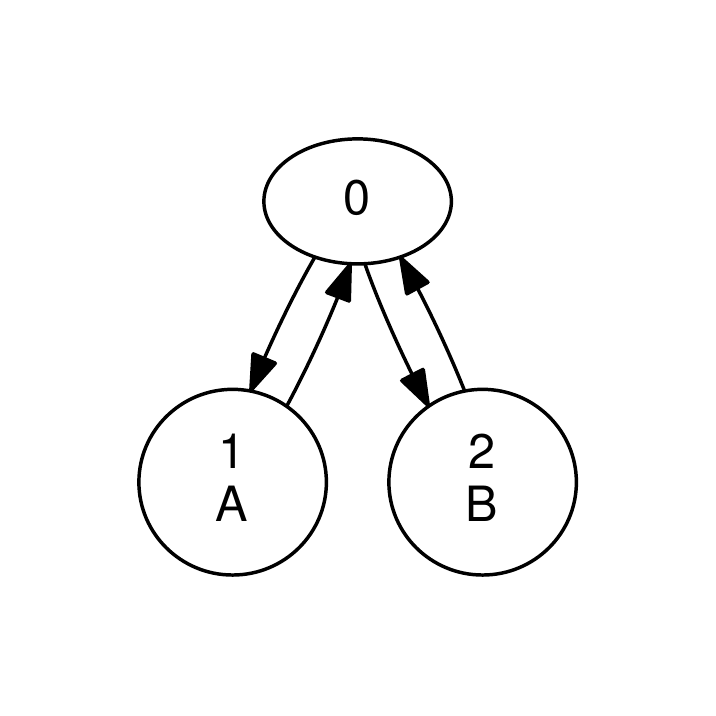}
\end{center}
\end{minipage}

\bigskip

\begin{minipage}[t]{0.3\textwidth}
Input $\bml{\dgsab}$ formula:

$$
\begin{array}{l}
\Diamond(A \wedge \neg B  \wedge \Diamond\Diamond A)\\
\Diamond(B \wedge \neg A  \wedge \Diamond\Diamond B)\\
\Box\Box(C \wedge \Box\neg C)\\
\dgsab\Box\Box\Box \bot \\
\end{array}
$$
\end{minipage}
~
\begin{minipage}[t]{0.3\textwidth}
Translated hybrid formula:

\vspace{-0.4cm}
$$\begin{array}{l}
\Diamond(A \wedge \neg B  \wedge \Diamond\Diamond A)\\
\Diamond(B \wedge \neg A  \wedge \Diamond\Diamond B)\\
\Box\Box(C \wedge \Box\neg C)\\
\down n_0 . \E \down n_1
  \Diamond( \down n_2 .\\
~ ~ ~  n_0:(\down n_3 . \Box (
   ( n_2 \wedge n_3:n_1)  \\
~ ~ ~ ~ ~ \vee ~ \down n_4 . \Box (
     (n_2 \wedge n_4:n_1) \\
~ ~ ~ ~ ~ ~ ~ \vee ~ \down n_5 . \Box (
       (n_2 \wedge n_5:n_1) ) ) ) ) ) \\
\end{array}
$$
\end{minipage}
~
\begin{minipage}[t]{0.3\textwidth}
Model found by HTab:

\vspace{-0.6cm}
\begin{center}
\includegraphics[width=3cm,height=4.6cm,keepaspectratio=false]{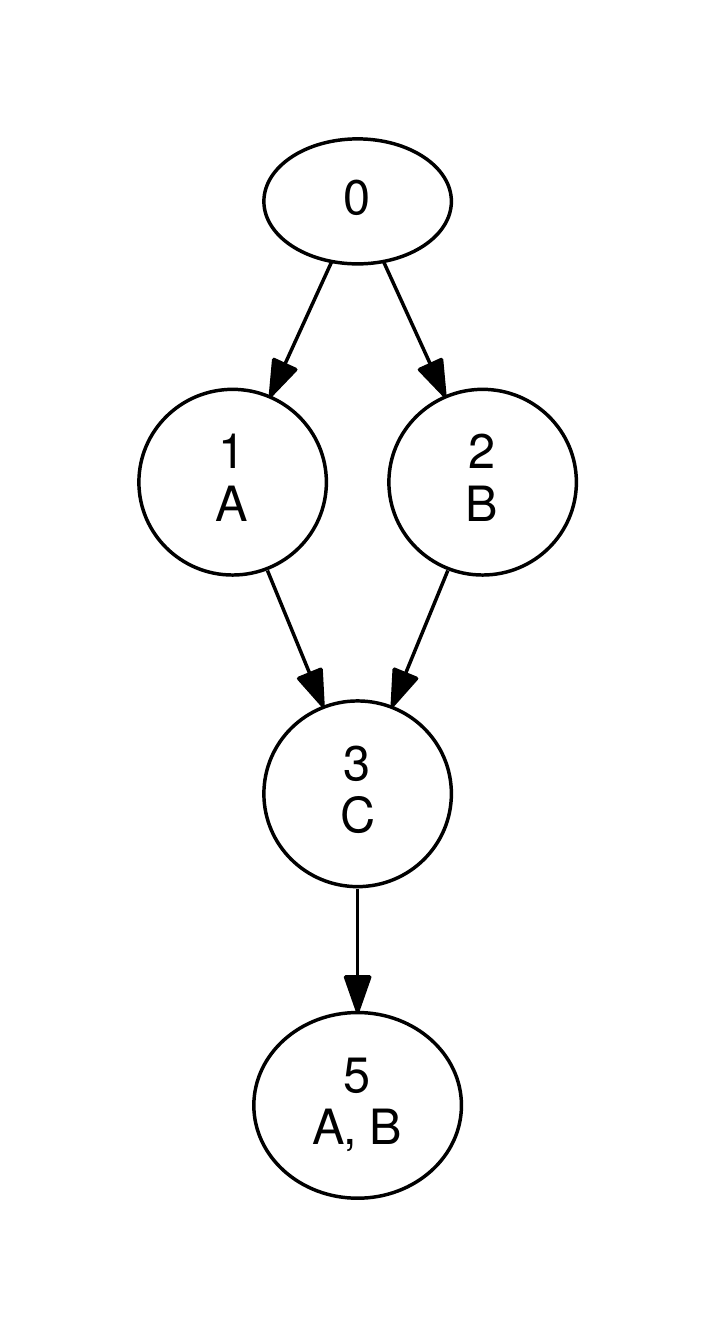}
\end{center}
\end{minipage}

\bigskip

\begin{minipage}[t]{0.3\textwidth}
Input $\bml{\dswap}$ formula:

\vspace{-0.5cm}
$$
\begin{array}{l}
\Diamond(A \wedge \neg B)\\
\Diamond(B \wedge \neg A)\\
\Box\Diamond \top\\
\Box\Box\Box \bot\\
\bswap\Box\bswap\Box\Box \bot\\
\Box\bswap\Box\Box \bot\\
\dswap\dswap\Diamond\Diamond\Diamond\Diamond\Diamond \top\\
\end{array}
$$
\end{minipage}
~
\begin{minipage}[t]{0.3\textwidth}
Translated hybrid formula:

\bigskip

\bigskip

\bigskip
\emph{\ldots too big to be displayed \ldots}
\end{minipage}
~
\begin{minipage}[t]{0.3\textwidth}
Model found by HTab:

\vspace{-0.4cm}
\begin{center}
\includegraphics[width=3cm]{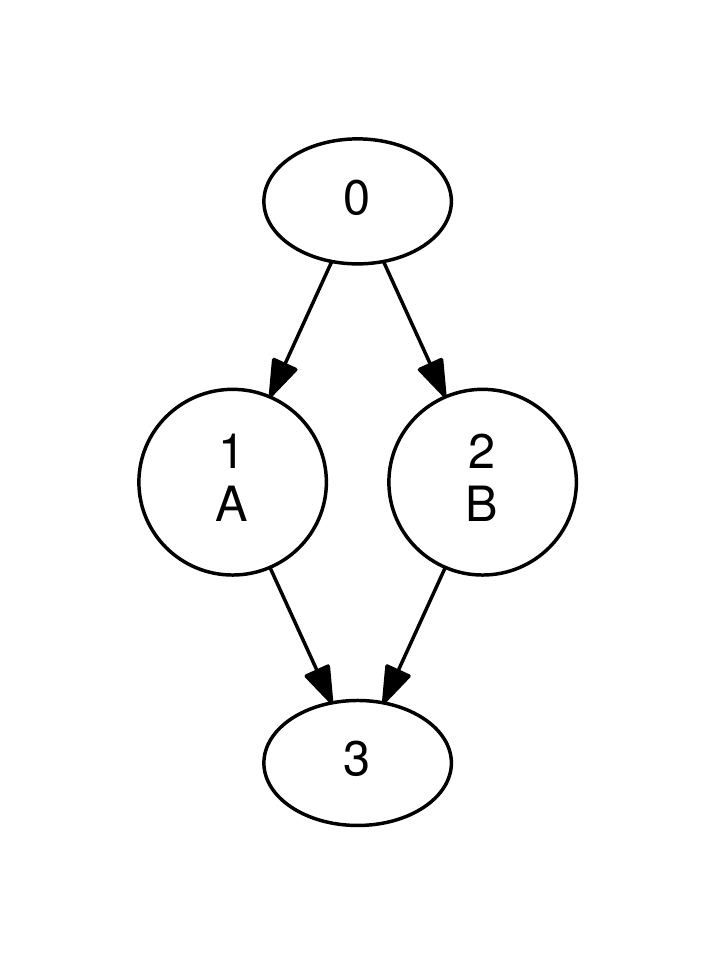}
\end{center}
\end{minipage}

More formulas from the six RC logics are available in the \verb_./rc/_
directory of the HTab source code.
This implementation is useful to check the correctness of the translations,
just by checking the satisfiable/unsatisfiable output of the prover for known
RC formulas. It is
also useful for checking that RC formulas build models in the expected way,
such as non-tree or diamond-shaped models.


\section{Decidable Fragments}\label{sec:decid}

Interesting decidable fragments of hybrid logics with binders have been
found over time.
Such decidable fragments are convenient for our relation-changing logics
in the light of the (computable) translations presented in Section~\ref{sec:trans}.
First, let us consider restricting the satisfiability problem over certain classes
of models. The following logics are known to be decidable over the indicated classes:

\begin{itemize}
\item[-] $\Hl(\E,\down)$ over linear frames (i.e., irreflexive, transitive, and trichotomous frames) \cite{FranceschetRS03,schneider07phd} (this includes $(\mathbb{N},<)$),
\item[-] $\Hl(\E,\down)$ over models with a single, transitive tree relation \cite{schneider07phd},
\item[-] $\Hl(\E,\down)$ over models with a single, $S5$, or complete relation \cite{schneider07phd},
\item[-] $\Hl(:,\down)$ over models with a single relation of bounded finite width~\cite{tencate2005complexity}; 
   as a corollary, also over finite models.
\end{itemize}

Since the translations preserve equivalence, we get:

\begin{corollary} \label{coro:semsat}
The satisfiability problem for all relation-changing modal logics over
linear, transitive trees, $S5$, and complete frames is decidable.
\end{corollary}

\begin{corollary}
The satisfiability problem for local sabotage and local swap logics over
models of bounded width is decidable.
\end{corollary}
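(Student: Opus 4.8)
The plan is to compose the equivalence-preserving translations of Section~\ref{sec:trans} with the decidability of $\Hl(:,\down)$ over single-relation models of bounded width listed above. The first thing to record is that, for the \emph{local} operators, the translation never calls on the universal modality $\E$. Inspecting Definition~\ref{transsab} and Definition~\ref{transswap}, every clause used to translate a formula of $\lsl$ or of $\swl$ --- namely the clauses for $\lozenge$, for $\dsab$, and for $\dswap$, together with the abbreviations ${\sf belongs}$ and ${\sf isSat}$ --- is built solely from the single diamond $\lozenge$, nominals, the satisfaction operator $:$, and the binder $\down$. Hence $(\varphi)'_\emptyset$ lies in $\Hl(:,\down)$ over a single accessibility relation whenever $\varphi \in \lsl$ or $\varphi \in \swl$. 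This is exactly why the corollary excludes the bridge operators, whose translation genuinely uses $\E$.

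Next I would make explicit that the translation leaves the frame untouched. In the equivalence-preservation theorems for sabotage and for swap, the hybrid formula $(\varphi)'_\emptyset$ is interpreted on the very same model $\model=\tup{W,R,V}$ as $\varphi$; the only change is that the valuation is extended to interpret the auxiliary nominals, all of which are immediately rebound by $\down$ and may therefore be assigned arbitrarily. In particular the underlying frame $\tup{W,R}$ is identical on both sides, so whatever ``width'' means precisely, it is a property of $\tup{W,R}$ alone and is preserved verbatim by the translation.

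With these two observations the biconditional is routine. Fix a width bound $k$. If $\varphi$ is satisfiable in some model of width at most $k$, say at $\model,w$, then by the relevant translation theorem $\model,w\models(\varphi)'_\emptyset$ after adjoining an arbitrary valuation for the bound nominals, yielding an $\Hl(:,\down)$ model of width at most $k$; conversely, from any $\Hl(:,\down)$ model of width at most $k$ satisfying $(\varphi)'_\emptyset$ we recover, by forgetting the auxiliary nominals, a relation-changing model of the same width satisfying $\varphi$. Thus $\varphi$ is satisfiable over width-$k$ models iff $(\varphi)'_\emptyset$ is. Since the translation is computable and satisfiability of $\Hl(:,\down)$ over single-relation models of bounded width is decidable~\cite{tencate2005complexity}, we obtain the desired decision procedure.

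The only step requiring genuine care is the invariance of width under the translation, and this is immediate once one notes that $R$ is copied verbatim while only $V$ is enlarged: no clause for a local operator introduces a new modality or a new edge into the hybrid model, so neither the out-degree nor the order-theoretic shape of the frame can grow. The remaining content is the purely syntactic check that the local clauses stay inside $\Hl(:,\down)$, which is already visible in Definitions~\ref{transsab} and~\ref{transswap}.
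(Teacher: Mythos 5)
Your proof is correct and takes essentially the same route as the paper: the corollary there follows immediately from composing the equivalence-preserving translations (which for the local operators stay inside $\Hl(:,\down)$ and reuse the very same underlying frame, only extending the valuation with bound nominals) with the decidability of $\Hl(:,\down)$ over single-relation models of bounded finite width~\cite{tencate2005complexity}. The paper compresses all of this into ``Since the translations preserve equivalence, we get:''; you merely spell out the $\E$-freeness and width-preservation checks that it leaves implicit.
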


Curiously, these results mean that relation-changing modal logics are
decidable over certain classes of models, even if the modifications
implied by evaluating RC formulas yield models that \emph{do not}
belong to such class. For instance, these two facts are simultaneously true:
sabotage logic is decidable on the class of $S5$ models, and deleting edges in
an $S5$ model can yield a non-$S5$ model.

Now, let us turn to syntactical definitions of decidable fragments.
We recall that local sabotage and local swap can be translated to $\Hl(:,\down)$.
Consider formulas of $\Hl(:,\down)$ in negation normal form.
$\Hl(:,\down) \setminus  \Box\down\Box$ is the fragment obtained by removing
formulas that contain a nesting of $\Box$, $\down$ and again $\Box$.
This fragment is decidable \cite{tencate2005complexity}.

Our translations do use the $\down$ binder in many places, but we can make
them a little more economical in that sense, at the expense of losing succinctness.

Take the following case for $\bml(\dswap)$:
$$
\begin{array}{rl}
(\Diamond \varphi)'_S = & \down n . \lozenge ( \neg {\sf belongs}(n,S) \wedge (\varphi)'_S).\\
\end{array}
$$
Instead of using the down-arrow binder and later ensuring that we
did not take a deleted edge by using $\neg {\sf belongs}(n , S)$, we can
do the following. For all pairs of nominals $(x,y) \in S$, the current state
$w$ satisfies one combination of the truth values of the nominals $x$. Let $X$ be
the set of true nominals $x$ at $w$.
Then, $(\varphi)'_S$ should be true at some accessible state $v$ that should not
satisfy any of the corresponding $y$ nominals for all $x \in X$.

Then, the translation becomes:
$$
\begin{array}{rl}
(\Diamond \varphi)'_S =
 & \underset{X \subseteq {\sf fst(S)}}{\bigvee}(
    \underset{x \in X}{\bigwedge} x \wedge \underset{x \notin X}\bigwedge \neg x
       \wedge \Diamond
         (\underset{y \in {\sf snd}(S,X)}{\bigwedge} \neg y \wedge (\varphi)'_S))
\end{array}
$$
where ${\sf fst}(S) = \{ x \mid (x,y) \in S \}$
and ${\sf snd}(S,X) = \{ y \mid (x,y) \in S , x \in X \}$.

In the case of $\bml(\dswap)$ we can do the same. We recall that the case
introduced in Section~\ref{sec:trans} was:
$$
\begin{array}{rl}
(\Diamond \varphi)'_S = &
 (\down n . \lozenge (\neg{\sf belongs}(n,S) \wedge (\varphi)'_S ))
 ~ ~\vee ~ ~ {\sf isSat}(S^{-1},(\varphi)'_S). \\
\end{array}
$$
Here the ${\sf isSat}(S^{-1},(\varphi)'_S)$ disjunct does not use
the $\down$ binder, while the first disjunct is similar to the case
of local sabotage, and can be replaced accordingly:
$$
\begin{array}{rl}
(\Diamond \varphi)'_S =
 & ~ ~ ~ ~ ~ \underset{X \subseteq {\sf fst(S)}}{\bigvee}(
    \underset{x \in X}{\bigwedge} x \wedge \underset{x \notin X}\bigwedge \neg x
       \wedge \Diamond
         (\underset{y \in {\sf snd}(S,X)}{\bigwedge} \neg y \wedge (\varphi)'_S))\\
 & \vee  ~ ~ {\sf isSat}(S^{-1},(\varphi)'_S).\\
\end{array}
$$
Let $\blacklozenge$ be either $\dsab$ or $\dswap$
and $\blacksquare$ be either $\bsab$ or $\bswap$. The following patterns in RC formulas
provoke the following patterns in the hybrid formula produced by the
translations:
\begin{center}
\begin{tabular}{|c |c|}
\hline
RC pattern & Produced pattern \\
\hline
$\Box$ & $\Box$ \\
\hline 
$\blacklozenge$ & $\down$ \\
\hline 
$\blacksquare$ &  $\down \Box \down$\\
\hline 
\end{tabular}
\end{center}

By considering these new versions of the translations, and by taking into account the syntactic decidable fragment of $\Hl(:,\down)$ mentioned above, we can establish the following result:

\begin{corollary} \label{th:syxsat}
The following fragments are decidable on the class of all relational models:
\begin{itemize}
\item[-] $\bml(\dsab) \setminus \{ \blacksquare\blacksquare,
                                \blacksquare\Box,
                                \Box\blacksquare,
                                \greysquare\blacklozenge\greysquare \}$
\item[-] $\bml(\dswap)\setminus \{ \blacksquare\blacksquare,
                                \blacksquare\Box,
                                \Box\blacksquare,
                                \greysquare\blacklozenge\greysquare \}$
\end{itemize}

\noindent where $\greysquare$ is either $\Box$ or $\blacksquare$.
\end{corollary}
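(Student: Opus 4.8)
The plan is to reduce the corollary to the known decidability of the syntactic fragment $\Hl(:,\down) \setminus \Box\down\Box$ (ten Cate--Franceschet), using the economical versions of the translations just introduced, which avoid the $\down$ binder in the $\Diamond$-clause. The key observation, summarised in the table above, is a bookkeeping of which hybrid patterns each RC pattern generates. Under the new translations, an RC $\Box$ produces a hybrid $\Box$ (since $(\Box\varphi)'_S = \neg(\Diamond\neg\varphi)'_S$ expands the $\Diamond$-clause into a conjunction of boxes with no intervening $\down$), a diamond-type modality $\blacklozenge \in \{\dsab,\dswap\}$ produces a single $\down$, and a box-type modality $\blacksquare \in \{\bsab,\bswap\}$ produces the pattern $\down\Box\down$. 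The whole argument is then: characterise exactly which RC formulas map to formulas inside $\Hl(:,\down)\setminus\Box\down\Box$, and check that the excluded RC patterns $\{\blacksquare\blacksquare, \blacksquare\Box, \Box\blacksquare, \greysquare\blacklozenge\greysquare\}$ are precisely those whose translations create a forbidden $\Box\down\Box$ nesting.

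First I would fix the negation normal form and recall that the dual operators are translated by duality, so that $\bsab$ and $\bswap$ genuinely yield $\down\Box\down$ while their diamond duals yield a single $\down$ over a $\Diamond$. Next I would carry out a case analysis over adjacent pairs of operators in an RC formula and track the hybrid output using the table. A forbidden $\Box\down\Box$ in the hybrid formula can arise in the following ways: (i) from $\blacksquare\blacksquare$, since $\down\Box\down$ followed by $\down\Box\down$ contains $\Box\down\Box$; (ii) from $\blacksquare\Box$, where the trailing $\Box$ of $\down\Box\down$ is immediately followed by the $\Box$ coming from the RC $\Box$, giving $\Box\down\Box$; (iii) from $\Box\blacksquare$, symmetrically, where the $\Box$ from the RC box precedes the $\down\Box\down$; and (iv) from the mixed pattern $\greysquare\blacklozenge\greysquare$, where a box-type context on both sides of a diamond-type modality $\blacklozenge$ sandwiches the single $\down$ it produces between two boxes. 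Conversely I would verify that any RC formula avoiding all four patterns never produces $\Box\down\Box$, so its translation lands in the decidable fragment.

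The main obstacle will be the mixed case $\greysquare\blacklozenge\greysquare$, since $\greysquare$ ranges over both $\Box$ and $\blacksquare$ and the $\down$ contributed by $\blacklozenge$ can be flanked by boxes arising either directly from an RC $\Box$ or from the outer/inner boxes of a neighbouring $\down\Box\down$ block. I would handle this by treating the hybrid output as a string over $\{\Box,\down\}$ and checking that the forbidden subword $\Box\down\Box$ appears if and only if one of the four listed RC patterns occurs; the delicate point is ensuring that the \emph{intermediate} boxes internal to a $\down\Box\down$ block are accounted for, so that no forbidden nesting is missed and none is spuriously introduced. Once this syntactic correspondence is established, decidability is immediate: the economical translation is computable and equivalence-preserving by the Swap and Sabotage theorems above, and it maps the two stated RC fragments into $\Hl(:,\down)\setminus\Box\down\Box$, which is decidable by \cite{tencate2005complexity}.
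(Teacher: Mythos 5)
Your overall route is the same as the paper's: replace the $\down$-based clause for $\Diamond$ by the economical disjunction (so that an RC $\Box$ produces only hybrid $\Box$'s, no binder), use the pattern table $\Box\mapsto\Box$, $\blacklozenge\mapsto\down$, $\blacksquare\mapsto\down\Box\down$, and reduce to the decidable fragment $\Hl(:,\down)\setminus\Box\down\Box$ of ten Cate--Franceschet. The gap is in how you read the patterns. That fragment is defined by forbidding a \emph{nesting} of $\Box$, $\down$, $\Box$ --- one operator occurring in the scope of another, not necessarily immediately --- and the four RC patterns must be read the same way (scope, not adjacency). Your case analysis is explicitly over \emph{adjacent} pairs and triples of RC operators and over subwords of the hybrid output (``the trailing $\Box$ of $\down\Box\down$ is immediately followed by the $\Box$''), and under that reading the direction you actually need fails. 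Concretely, $\Box\dsab\dsab\Box p$ contains none of the four patterns as adjacent blocks, yet its translation has the shape $\Box(\ldots\down n_1.\Diamond(\ldots\down m_1.\down n_2.\Diamond(\ldots\down m_2.(\ldots\Box\ldots)))\ldots)$: the outer $\Box$ has a $\down$ in its scope which in turn has a $\Box$ in its scope, so the translation lies \emph{outside} the decidable fragment and no decidability conclusion can be drawn for it. Your own case (i) already betrays the inconsistency: $\down\Box\down\down\Box\down$ contains $\Box\down\Box$ only as a scattered subsequence, never as a contiguous subword, so you are tacitly using the scope reading on the hybrid side while using adjacency on the RC side.

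The repair is to use the nesting reading uniformly. Every hybrid $\Box$ produced by the economical translation originates from an RC $\Box$ or an RC $\blacksquare$, every hybrid $\down$ (that scopes over further translated material) originates from an RC $\blacklozenge$ or $\blacksquare$, and the translation preserves scope relations; hence any $\Box$--$\down$--$\Box$ nesting in the output pulls back to a nesting of RC operators matching $\blacksquare\blacksquare$, $\blacksquare\Box$, $\Box\blacksquare$, or $\greysquare\blacklozenge\greysquare$, where ``followed by'' means ``in the scope of.'' In particular, $\Box\dsab\dsab\Box p$ is then correctly excluded from the fragment, since the outer $\Box$ scopes over a $\blacklozenge$ which scopes over the inner $\Box$. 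Note also that for the corollary only this converse direction is needed (formulas in the fragment translate into the decidable fragment); your forward direction, showing the excluded patterns really do generate forbidden nestings, only establishes that the fragment is tight and can be omitted.
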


\section{Comparing Expressive Power} 
\label{sec:exppow}

We have introduced translations for the six relation-changing modal logics
from Section~\ref{sec:basic} into hybrid logic. In some cases (for the local version
of swap and sabotage), the obtained formulas fall into the fragment $\Hl(:,\down)$.
On the other hand, for encoding the rest of the logics we need also to use the
universal modality $\E$. An interesting question is whether we can obtain translations
from hybrid to relation-changing logics, i.e., if some of the relation-changing
logics considered in this article are as expressive as some hybrid logic. Let us define
first, the expressive power comparisons we will use.

\begin{definition}[${\cal L} \leq {\cal L}'$] 
\label{def:expcomp}
We say that ${\cal L}'$ is \emph{at least as expressive as} ${\cal L}$
(notation ${\cal L}\leq{\cal L}'$) if there is a function $\Tr$ between
formulas of ${\cal L}$ and ${\cal L}'$ such that for
every model $\model$ and every formula $\varphi$ of ${\cal L}$ we have that
$$\model \models_{\cal L} \varphi \mbox{ iff } \model \models_{\cal L'} \Tr(\varphi).$$
$\model$ is seen as a model of
${\cal L}$ on the left and as a model of ${\cal L'}$ on the right, and we
use in each case the appropriate semantic relation $\models_{\cal L}$ or
$\models_{\cal L'}$ as required.

$\L'$ is strictly more expressive than $\L$ ($\L<\L'$) if $\L\leq\L'$ but
not $\L'\leq\L$. Finally, we say that ${\cal L}$ and ${\cal L}'$ are
\emph{incomparable} if ${\cal L} \nleq {\cal L}'$ and  ${\cal L}' \nleq {\cal L}$.
\end{definition}



In~\cite{areces12,areces14igpl,fervari14phd,AFH15} we discussed the expressive power
of relation-changing modal logics by introducing their corresponding notions
of bisimulations and using them to compare the logics among each other. We concluded that
they are all {\em incomparable in expressive power.}\footnote{Except for the 
local and global swap operators, which is still open in one direction.} As a consequence, we conclude
that it is not possible that two of them capture the same fragment of hybrid logic.
In fact, we will prove that all the relation-changing logics considered here
are strictly less expressive than the corresponding hybrid logic in which they are translated.

\begin{theorem}
\label{th:rcstricthl}
Let $\blacklozenge_1\in\{\dsab,\dswap\}$, we have $\bml(\blacklozenge_1)<\Hl(:,\down)$.
For $\blacklozenge_2\in\{\dgsab,\dgswap,\dbridge,\dgbridge\}$, we have 
$\bml(\blacklozenge_2)<\Hl(\E,\down)$.
\end{theorem}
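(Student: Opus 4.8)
The plan is to prove each strict inequality in two halves. The inequality $\bml(\blacklozenge)\leq\mathcal{HL}$ (with $\mathcal{HL}=\Hl(:,\down)$ for the local operators $\dsab,\dswap$ and $\mathcal{HL}=\Hl(\E,\down)$ for $\dgsab,\dgswap,\dbridge,\dgbridge$) is already delivered by the translations of Section~\ref{sec:trans}: the theorems for Sabotage, Bridge and Swap produce, for every $\varphi$, an equivalent hybrid formula $(\varphi)'_\emptyset$, and the accompanying remarks confirm that the local cases stay inside $\Hl(:,\down)$ while the remaining four require $\E$. Since pointwise equivalence entails the model-level equivalence of Definition~\ref{def:expcomp}, this settles $\leq$. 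It therefore remains to establish strictness, i.e.\ that the hybrid logic is \emph{not} at most as expressive as the corresponding $\bml(\blacklozenge)$.

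For strictness I would argue by bisimulation invariance. Each $\bml(\blacklozenge)$ is invariant under its own notion of bisimulation $\bisim_\blacklozenge$, as established in the prior work cited in Section~\ref{sec:exppow} (\cite{areces12,areces14igpl,fervari14phd,AFH15}); this invariance transfers to the model-level equivalence used in Definition~\ref{def:expcomp} via the appropriate total bisimulations. Suppose toward a contradiction that $\mathcal{HL}\leq\bml(\blacklozenge)$ through some translation $\Tr$. Then for any hybrid formula $\chi$ the formula $\Tr(\chi)\in\bml(\blacklozenge)$ is $\bisim_\blacklozenge$-invariant, and since $\Tr$ preserves truth across all models, $\chi$ itself would be $\bisim_\blacklozenge$-invariant. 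Hence it suffices to exhibit, for each $\blacklozenge$, one hybrid formula together with a pair of $\bisim_\blacklozenge$-related models that the formula separates.

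The witnesses should exploit the one ability the binder has that no relation-changing operator shares: to name a state and later test for a \emph{return} to it. A natural separating formula is of the shape $\down x.\Box\neg x$ (the evaluation point is irreflexive) or a fixed-length return pattern $\down x.\Diamond\cdots\Diamond x$; for the global group one prefixes $\E$ to turn this into a global structural assertion. The companion pair of models I would engineer so that they agree on all $\bisim_\blacklozenge$ back-and-forth conditions, yet one exhibits the named cycle (or reflexive loop) and the other does not. The point is that the ordinary modal content of the two models is identical to arbitrary depth, so only the naming mechanism of the binder can tell them apart.

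The main obstacle is precisely the verification that the chosen pair is genuinely $\bisim_\blacklozenge$-related. Unlike ordinary modal bisimulation, $\bisim_\blacklozenge$ carries extra back-and-forth clauses that must be checked \emph{against the updated accessibility relations} obtained by deleting, swapping, or adding edges, so a pair that is bisimilar for $\bml$ need not be bisimilar for any given $\bml(\blacklozenge)$. The models must therefore be robust under every single-edge modification the operators can trigger — typically by making them symmetric or large enough that no bounded sequence of updates can expose the distinguishing cycle — and the bulk of the argument is the operator-specific confirmation that the modified relations still satisfy the matching conditions of $\bisim_\blacklozenge$. Once a separating pair is produced for each of the six operators, strictness, and hence the theorem, follows in all cases.
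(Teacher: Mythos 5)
Your skeleton coincides with the paper's proof: the $\leq$ half is exactly the translations of Section~\ref{sec:trans}, and strictness is obtained from invariance of each $\bml(\blacklozenge)$ under its own notion of bisimulation together with a pair of bisimilar models separated by a hybrid formula. But the proposal stops exactly where the real work begins: you never exhibit a single concrete pair of models, and you never verify (or cite a verification) that any pair is bisimilar for the relevant relation-changing logic --- indeed you yourself flag this as ``the bulk of the argument''. Since the $\leq$ half is already delivered by the earlier theorems, producing and checking these witnesses \emph{is} the content of the strictness claim, so what you have is a proof plan rather than a proof. The deferred step is also not routine, and your one concrete candidate shows why: a pair separated by $\down x . \Box \neg x$ must have a reflexive evaluation point on one side only, and swap logic is rather good at detecting that asymmetry --- for instance it distinguishes a reflexive point from a two-element cycle via $\dswap(\Diamond\top \wedge \Box\Box\bot)$ --- so the ``engineering'' you postpone is precisely the hard part.

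The paper fills this gap by reusing two pairs of models from~\cite{AFH15}, whose bisimilarity for the respective logics is established in that prior work. A single reflexive point versus two mutually connected reflexive points is bisimilar for $\bml(\dswap)$, $\bml(\dbridge)$, $\bml(\dgswap)$ and $\bml(\dgbridge)$, and is separated by $\down n . \Box n$ (note both evaluation points are reflexive, which is how the swap case is made to work); a three-point model in which the same endpoint is reachable in one and in two steps, versus its tree-like four-point variant, is bisimilar for $\bml(\dsab)$ and $\bml(\dgsab)$, and is separated by $\down n . \Diamond \down m .\, n{:}\Diamond\Diamond m$. Two further points of economy you miss: two pairs suffice for all six operators (you plan a separate construction for each), and both separating formulas already lie in $\Hl(:,\down)$, so no $\E$-prefixing is needed even for the global logics --- which yields the stronger remark, made right after the theorem, that not even the fragment $\Hl(:,\down)$ can be captured by any of the six relation-changing logics.
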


\begin{proof}
For any of the logics mentioned above, we have translations into the 
corresponding hybrid logic. Now we need to prove that these translations
do not cover their entire target language (modulo equivalence). In order to do that, 
we provide bisimilar models for relation-changing modal logics which can
be distinguished by some hybrid formula. In Figure~\ref{fig:models}, we show
two pairs of models already introduced in~\cite{AFH15} that cover all possibilities of bisimilarity.

\begin{figure}[!h]
\begin{center}
\begin{tabular}{|c |c|l|}
\hline
$\model,w$ & $\model',w'$ & Bisimilar for\\
\hline
\hline 
\begin{minipage}{.25\linewidth}
\begin{tikzpicture}[>=latex]
\node (b) at (3,0)   [shape=circle,draw,fill, inner sep=1pt, label=below:$ w$] {} ;
\path [ar,->] (b) edge [loop] (b);
\end{tikzpicture}
\end{minipage}
&
\begin{minipage}{.25\linewidth}
\begin{tikzpicture}[>=latex]
\node (a0) at (7,0)   [shape=circle,draw,fill, inner sep=1pt,label=below:$ w'$] {} ;
\node (a1) at (9,0)  [shape=circle,draw,fill, inner sep=1pt] {} ;
\path [ar,->] (a0) edge [loop] (a0);
\path [ar,->] (a1) edge [loop] (a1);
\draw [ar,->] (a0) edge [bend left] (a1);
\draw [ar,->] (a1) edge [bend left] (a0);
\end{tikzpicture}
\end{minipage}
& \begin{minipage}{2cm}
\smallskip
  $\bml(\dswap)$
  $\bml(\dbridge)$
  $\bml(\dgswap)$
  $\bml(\dgbridge)$
  \end{minipage}
\\
\hline
\begin{minipage}{.25\linewidth}
\begin{tikzpicture}[>=latex]
\node (b0) at (2,0)   [shape=circle,draw,fill, inner sep=1pt, label=above:$ w$] {} ;
\node (b1) at (4,0)   [shape=circle,draw,fill, inner sep=1pt, label=above:$ $] {} ;
\node (b2) at (3,-1)   [shape=circle,draw,fill, inner sep=1pt, label=below:$ $] {} ;
\draw [ar,->] (b0) edge [left] (b1);
\draw [ar,->] (b1) edge [left] (b2);
\draw [ar,->] (b0) edge [left] (b2);
\end{tikzpicture}
\end{minipage}
&
\begin{minipage}{.25\linewidth}
\begin{tikzpicture}[>=latex]
\node (a0) at (6,0)   [shape=circle,draw,fill, inner sep=1pt, label=above:$ w'$] {} ;
\node (a1) at (8,0)  [shape=circle,draw,fill, inner sep=1pt,  label=above:$ $] {} ;
\node (a2) at (8,-1)  [shape=circle,draw,fill, inner sep=1pt,  label=below:$ $] {} ;
\node (a3) at (6,-1)  [shape=circle,draw,fill, inner sep=1pt,  label=below:$ $] {} ;
\draw [ar,->] (a0) edge [left] (a1);
\draw [ar,->] (a1) edge [left] (a2);
\draw [ar,->] (a0) edge [left] (a3);
\end{tikzpicture}
\end{minipage}
& \begin{minipage}{2cm}
  $\bml(\dsab)$\\
  $\bml(\dgsab)$
  \end{minipage}
  \\
\hline
\end{tabular}
\end{center}
\caption{Bisimilar models}\label{fig:models}
\end{figure}

The two models in the first row can be distinguished by
the formula $\down n. \Box n$, which establishes that the only successor
of the evaluation point is itself. This formula is true at $\model,w$
and false at $\model',w'$. Models in the second row can be distinguished
by the formula $\down n. \lozenge \down m. n{:}\lozenge\lozenge m$,
which says that from the evaluation point it is possible to arrive to the same
state in one or two steps. This is true at
$\model,w$ but false at $\model',w'$.
\end{proof}

Notice that both hybrid formulas we introduced above belong
to the fragment $\Hl(:,\down)$, i.e., it was not necessary to use the $\E$ operator.
This means that even though we use $\E$ in some of the translations (and we strongly believe
that it is essential for some encodings) there are fragments
of $\Hl(:,\down)$ that cannot be captured by relation-changing modal logics.

\section{Final Remarks}
\label{sec:final}

In this article, we introduced equivalence-preserving translations from six logics we named
{\em relation-changing} to a very expressive hybrid logic.
We considered three kinds of modifications:
deleting, adding, and swapping edges, that can be performed both globally (anywhere in the model) 
and locally (modifying adjacent edges from the evaluation point). 
On the other hand, hybrid logic has operators to rename states
in a model with some particular atomic symbols named nominals. We use the down-arrow operator
$\down$ to name pairs of states that represent modified states. In this way, we keep
track of the evolution of a model.

It is known that the hybrid logic $\Hl(\E,\down)$ has the same expressive power as $\fol$,
and we introduced standard translations from relation-changing logics to $\fol$ in~\cite{AFH15}.
However, by giving explicit translations to hybrid logic we can benefit from its decidable
fragments to find decidable fragments of relation-changing
modal logics. Also, these translations are useful to analyze expressive power. We showed that
the six logics we considered are strictly less expressive than $\Hl(\E,\down)$. In fact,
despite we used the modality $\E$ in some translations, all relation-changing logics we
considered here cannot capture the full fragment $\Hl(:,\down)$ (which is less expressive than
$\Hl(\E,\down)$). In summary, we learned that relation-changing modal logics are languages
that enable to talk directly and succinctly about distinct kinds of model modifications,
but with a little effort they can be simulated by hybrid logics.

Translating to $\Hl(\E, \down)$ also enabled us to easily obtain an implementation
for relation-changing modal logics, by extending the hybrid logic theorem prover HTab.
Satisfiability checking and model building can thus be automated and were useful to
empirically verify our translations on concrete cases.
However, we did not implement the changes of Section~\ref{sec:decid}.
Indeed, although in some cases they provide a way to avoid using the down-arrow binder in
the translations (which is a source of undecidability for hybrid logic),
they also make the generated formulas impractically bigger.

We studied six relation-changing modal logics with the goal of covering a sufficiently varied sample 
of alternatives. Clearly, other operators could have been included in this exploration, and actually 
some alternative choices have been already investigated in the literature, e.g., the adjacent sabotage 
operator discussed in~\cite{rohde06phd}, or the more generic approach investigated in~\cite{AFH15}. 

Even though relation-changing modal logics have been extensively investigated~\cite{fervari14phd,AFH15}, 
there are still many interesting questions to be answered. The {\em hybrid perspective} we introduced in this article,
as well as in~\cite{areces13}, gives us a new way to think of the relation-changing framework.
As an example, we can use {\em hybridization techniques} (a very standard technique in modal logic~\cite{blackburn01})
to find complete axiomatizations or compute interpolants for relation-changing modal logics.

\paragraph{Acknowledgments:} This work was partially supported by
grant ANPCyT-PICT-2013-2011, DFG grant LU 1417/2-1,  
and the Laboratoire International Associ\'e ``INFINIS''.

\bibliographystyle{eptcs}
\bibliography{gandalf16}

\end{document}